
\documentclass[letterpaper, 10pt, conference]{ieeeconf}  

\usepackage{graphicx,wrapfig}
\usepackage[utf8]{inputenc}
\usepackage[T1]{fontenc}
\usepackage{mathtools}
\usepackage[english]{babel}
\usepackage{comment}
\usepackage{amsmath}
\usepackage{amssymb}
\usepackage{bm}
\usepackage{psfrag}
\usepackage{esdiff}
\usepackage{float}
\usepackage{caption}
\usepackage{subcaption}
\usepackage{bm}
\usepackage{algorithmicx}
\usepackage{algpseudocode}
\usepackage[ruled,vlined]{algorithm2e}
\usepackage{lipsum}
\usepackage{array}
\usepackage{blindtext}
\usepackage{epstopdf}
\usepackage{ragged2e}
\usepackage{mathrsfs}
\usepackage{tabu}
\graphicspath{ {figures/} }

\usepackage{nomencl}
\usepackage{cite}
\usepackage{lmodern}
\usepackage{enumerate}
\usepackage{tikz}
\usepackage{arcs}
\usepackage{etoolbox}
\usepackage{hyperref}

\IEEEoverridecommandlockouts                              

\pdfminorversion=4 

\DeclareMathOperator*{\argmin}{arg\,min}

\DeclarePairedDelimiter{\floor}{\lfloor}{\rfloor}

\newcommand{\be}{\begin{equation}}
\newcommand{\ee}{\end{equation}}
\newcommand{\bse}{\begin{subequations}}
	\newcommand{\ese}{\end{subequations}}
\newcommand{\bewn}{\begin{equation*}}
\newcommand{\eewn}{\end{equation*}}
\newcommand{\bbmat}{\begin{bmatrix}} 
	\newcommand{\ebmat}{\end{bmatrix}}
\newcommand{\bd}{\begin{displaymath}}
\newcommand{\ed}{\end{displaymath}}
\newcommand{\bea}{\begin{eqnarray}}
\newcommand{\eea}{\end{eqnarray}}
\newcommand{\ba}{\begin{array}}
	\newcommand{\ea}{\end{array}}
\newcommand{\baa}{\begin{array}{ll}}
	\newcommand{\eaa}{\end{array}}
\newcommand{\ds}{\displaystyle}
\newcommand{\bc}{\begin{center}}
	\newcommand{\ec}{\end{center}}
\newcommand{\ben}{\begin{enumerate}}
	\newcommand{\een}{\end{enumerate}}
\newcommand{\bi}{\begin{itemize}}
	\newcommand{\ei}{\end{itemize}}
\newcommand{\bt}{\begin{tabular}}
	\newcommand{\et}{\end{tabular}}
\newcommand{\bte}{\begin{table}}
	\newcommand{\ete}{\end{table}}


\makeatletter
\makeatother
\newcommand{\norm}[1]{\left\lVert#1\right\rVert}   

\makeatletter
\renewcommand\paragraph{\@startsection{paragraph}{4}{\z@}%
	{-2.5ex\@plus -1ex \@minus -.25ex}%
	{1.25ex \@plus .25ex}%
	{\normalfont\normalsize\bfseries}}
\makeatother

\makeatletter

\makeatother

\makeatletter
\providecommand\@gobblethree[3]{}
\patchcmd{\over@under@arc}
 {\@gobbletwo}
 {\@gobblethree}
 {}{}
\makeatother


\newtheorem{theorem}{Theorem}

\newtheorem{lemma}[theorem]{\textbf{Lemma}}

\newtheorem{problem}{\textbf{Problem}}
\newtheorem{definition}{\textbf{Definition}}

\newcommand{\bR}{\mathbb{R}}
\newcommand{\bZ}{\mathbb{Z}}
\newcommand{\calA}{\mathcal{A}}

\newcommand{\calD}{\mathcal{D}}
\newcommand{\calE}{\mathcal{E}}

\newcommand{\calG}{\mathcal{G}}

\newcommand{\calP}{\mathcal{P}}

\newcommand{\calS}{\mathcal{S}}

\newcommand{\calV}{\mathcal{V}}
\newcommand{\calW}{\mathcal{W}}
\newcommand{\calZ}{\mathcal{Z}}

\newcommand{\scriptA}{\mathscr{A}}
\newcommand{\scriptD}{\mathscr{D}}

\newcommand{\scriptF}{\mathscr{F}}
\newcommand{\scriptR}{\mathscr{R}}

\newcommand{\pdydx}[2]{\frac{\partial{#1}}{\partial{#2}}}

\newcommand{\abs}[1]{\left |#1\right |}




\title{\LARGE \bf
	Multi-Swarm Herding: Protecting against Adversarial Swarms
}

\author{Vishnu S. Chipade and Dimitra Panagou
	\thanks{The authors are with the Department of Aerospace Engineering,
		University of Michigan, Ann Arbor, MI, USA;
		{\tt\small (vishnuc,dpanagou)@umich.edu}}
	\thanks{This work has been funded by the Center for Unmanned Aircraft Systems (C-UAS), a National Science Foundation Industry/University Cooperative Research Center (I/UCRC) under NSF Award No. 1738714 along with significant contributions from C-UAS industry members.}
}

\begin{document}
	\maketitle
	\thispagestyle{empty}
	\pagestyle{empty}
	
	\begin{abstract}
		
		This paper studies a defense approach against one or more swarms of adversarial agents. In our earlier work, we employ a closed formation (`StringNet') of defending agents (defenders) around a swarm of adversarial agents (attackers) to confine their motion within given bounds, and guide them to a safe area. The control design relies on the assumption that the adversarial agents remain close enough to each other, i.e., within a prescribed connectivity region. To handle situations when the attackers no longer stay within such a connectivity region, but rather split into smaller swarms (clusters) to maximize the chance or impact of attack, this paper proposes an approach to learn the attacking sub-swarms and reassign defenders towards the attackers. We use a `Density-based Spatial Clustering of Application with Noise (DBSCAN)' algorithm to identify the spatially distributed swarms of the attackers. Then, the defenders are assigned to each identified swarm of attackers by solving a constrained generalized assignment problem. Simulations are provided to demonstrate the effectiveness of the approach.
		
		
		
	\end{abstract}
	
	\section{Introduction}
	
	Swarms of low-cost agents such as small aerial robots may pose risk to safety-critical infrastructure such as government facilities, airports, and military bases. Interception strategies \cite{chen2017multiplayer, coon2017control} against these threats may not be feasible or desirable in an urban environment due to posing greater risks to humans and the surrounding infrastructure. Under the assumption of risk-averse and self-interested adversarial agents (attackers) that tend to move away from the defending agents (defenders) and from other dynamic objects, herding can be used as an indirect way of guiding the attackers to a safe area.  
	
	In our recent work \cite{chipade2019swarmherding,chipade2020swarmherding}, we developed a herding algorithm, called `StringNet Herding', to herd a swarm of adversarial attackers away from a safety-critical (protected) area. A closed formation (`StringNet') of defending agents connected by string barriers is formed around a swarm of attackers staying together to confine their motion within given bounds, and guide them to a safe area. However, the assumption that the attackers will stay together in a connectivity region, and they will react to the defenders collectively as a single swarm while attacking the protected area, can be quite conservative in practice.
		
	In this paper, we build upon our earlier work on `StringNet Herding' \cite{chipade2020swarmherding} and study the problem of defending a safety-critical (protected) area from adversarial agents that \textit{may} or \textit{may not} stay together. We propose a `Multi-Swarm StringNet Herding' approach that uses clustering-based defender assignment, and the `StringNet Herding' method to herd the adversarial attackers to known safe areas.
	
	\subsubsection{Related work}
	Several approaches have been proposed to solve the problem of herding. Some examples are: the $n$-wavefront algorithm \cite{gade2015herding,paranjape2018robotic}, where the motion of the birds on the boundary of the flock is influenced based on the locations of the airport and the safe area; herding via formation control based on a potential-field approach \cite{pierson2018controlling}; biologically-inspired "wall" and "encirclement" methods that dolphins use to capture a school of fish \cite{haque2011biologically}; an RRT approach that finds a motion plan for the agents while maintaining a cage of potentials around the sheep \cite{varava2017herding}; sequential switching among the chased targets \cite{licitra2017single}.
	In general, the above approaches suffer from one or more of the following: 1) dependence on knowing the analytical modeling of the attackers' motion, 2) lack of modeling of the adversarial agents' intent to reach or attack a certain protected area, 3) simplified motion and environment models. The proposed `StringNet Herding' approach relaxes the first and the third issue above, and takes into account the second one for control design.
    Clustering of data points is a popular machine learning technique \cite{xu2015comprehensive}. 
	There are various categories of clustering algorithms: 1) partition based (K-means \cite{macqueen1967some}), 2) hierarachy based (BIRCH \cite{zhang1996birch}), 3) density based (DBSCAN \cite{ester1996density}), 4) stream based (STREAM \cite{o2002streaming}), 6) graph theory based (CLICK \cite{sharan2000click}). Spatial proximity of the agents is crucial for the problem at hand so our focus will be mostly on the density based approaches in this paper.

	Assignment problems have also been studied extensively \cite{burkard2012assignment}. In this paper, we are interested in a generalized assignment problem (GAP) \cite{oncan2007survey}, in which there are more number of objects than knapsacks to be filled. GAP is known to be NP-hard but there are approximation algorithms to solve an arbitrary instance of GAP \cite{oncan2007survey}.
	 
	\subsubsection{Overview of the proposed approach}
	
	
	The proposed approach involves: 1) identification of the clusters (swarms) of the attackers that stay together, 2) distribution and assignment of the defenders to each of the identified swarms of the attackers, 3) use of `StringNet Herding' approach by the defenders to herd each identified swarm of attackers to the closest safe area. 
	
	More specifically, we use the ``Density based Spatial Clustering of Application with Noise (DBSCAN)" algorithm \cite{ester1996density} to identify the swarms of the attackers in which the attackers stay in a close proximity of the other attackers in the same swarm. We then formulate a generalized assignment problem with additional constraints on the connectivity of the defenders to find which defender should go against which swarm of attackers and herd it to one of the safe areas. This connectivity constrained generalized assignment problem (C2GAP) is modeled as a mixed integer quadratically constrained program (MIQCP) to obtain an optimal assignment solution. We also provide a hierarchical algorithm to find the assignment quickly, which along with the MIQCP formulation is the major contribution of this paper.

	\subsubsection{Structure of the paper}
	Section \ref{sec:math_model} describes the mathematical modeling and problem statement. The StringNet herding approach is briefly discussed in Section \ref{sec:herding}. The approach on clustering and the defenders-to-attackers assignment for multiple-swarm herding is discussed in Section \ref{sec:multi_swarm_herding}. 
	Simulations and conclusions are provided in Section \ref{sec:simulations} and \ref{sec:conclusions}, respectively.
	
	\section{Modeling and Problem Statement}\label{sec:math_model}
	\textit{Notation}: The set of integers greater than 0 is denoted by $\bZ_{>0}$. Vectors and matrices are denoted by small and capital bold letters, respectively (e.g., $\textbf{r}$, $\textbf{P}$). $\norm{.}$ denotes the Euclidean norm of its argument. $\abs{.}$ denotes the absolute value of a scalar, and cardinality if the argument is a set. $n!$ is a factorial of $n$.

	We consider $N_a$ attackers $\calA_i$, $i \in I_a= \{1,2,...,N_a\}$, and $N_d$ defenders $\calD_j$, $j \in I_d= \{1,2,...,N_d\}$, operating in a 2D environment $\calW \subseteq \mathbb{R}^2$ that contains a protected area $\calP \subset \calW$, defined as $\calP=\{\textbf{r} \in \bR^2 \;| \; \norm{\textbf{r}-\textbf{r}_p}\le \rho_p\}$, and $N_s$ safe areas $\calS_{m} \subset \calW$, defined as $\calS_{m}=\{\textbf{r} \in \bR^2 \; | \; \norm{\textbf{r}-\textbf{r}_{sm}}\le \rho_{sm}\}$, for all $m \in I_s=\{1,2,...,N_s\}$, where $(\textbf r_p, \rho_p)$ and $(\textbf r_{sm}, \rho_{sm})$ are the centers and radii of the corresponding areas, respectively. The number of defenders is no less than that of attackers, i.e., $N_d \ge N_a$. The agents $\calA_i$ and $\calD_j$ are modeled as discs of radii $\rho_a$ and $\rho_d\le \rho_a$, respectively and move under double integrator (DI) dynamics with quadratic drag: 
	\begin{subequations} \label{eq:attackDyn1}
	\begin{align}
	\dot{\textbf{r}}_{ai}
	=\textbf{v}_{ai}, \quad \quad 
	\dot{\textbf{v}}_{ai}
	=\textbf{u}_{ai}-C_{D} \norm{\textbf{v}_{ai}}\textbf{v}_{ai};\\
	\dot{\textbf{r}}_{dj}
	=\textbf{v}_{dj}, \quad \quad 
	\dot{\textbf{v}}_{dj}
	=\textbf{u}_{dj}-C_{D} \norm{\textbf{v}_{dj}}\textbf{v}_{dj};\\
	\norm{\mathbf{u}_{ai}} \le \bar{u}_a, \quad
	\norm{\mathbf{u}_{dj}} \le \bar{u}_d; 
	\end{align}
	\end{subequations}
	where $C_D$ is the drag coefficient, $\mathbf{r}_{ai}=[x_{ai}\; y_{ai}]^T$ and $\mathbf{r}_{dj}=[x_{dj}\; y_{dj}]^T$ are the position vectors of $\calA_i$ and $\calD_j$, respectively; $\mathbf{v}_{ai}=[v_{x_{ai}}\; v_{y_{ai}}]^T$, $\mathbf{v}_{dj}=[v_{x_{dj}}\; v_{y_{dj}}]^T$ are the velocity vectors, respectively, and $\mathbf{u}_{ai}=[u_{x_{ai}}\; u_{y_{ai}}]^T$, $\mathbf{u}_{dj}=[u_{x_{dj}}\; u_{y_{dj}}]^T$ are the accelerations (the control inputs), respectively.	
	 The defenders are assumed to be faster than the attackers, i.e., $\bar{u}_a < \bar{u}_d$.
	This model poses a speed bound on each player with limited acceleration control, i.e., $v_{ai}=\norm{\mathbf{v}_{ai}}<\bar{v}_a=\sqrt{\frac{\bar{u}_a}{C_d}}$ and $v_{dj}=\norm{\mathbf{v}_{dj}}<\bar{v}_d=\sqrt{\frac{\bar{u}_d}{C_d}}$. 
	We assume that every defender $\calD_j$ senses the position $\textbf r_{ai}$ and velocity $\textbf{v}_{ai}$ of the attacker $\calA_i$ when $\calA_i$ is inside a circular sensing-zone $\calZ_d^s=\{\mathbf r \in \mathbb{R}^2 |\; \norm{\textbf{r}-\textbf{r}_p} \le \rho_d^s\}$. Each attacker $\calA_i$ has a similar local sensing zone $\calZ_{ai}^s=\{\textbf{r} \in \bR^2 \;| \; \norm{\textbf{r}-\textbf{r}_{ai}}\le \rho_{ai}^s \}$ inside which they sense defenders' positions and velocities.
	

	The attackers aim to reach the protected area $\calP$. The attackers may use flocking controllers \cite{dai2014flocking} to stay together, or they may choose to split into different smaller swarms \cite{goel2019leader,raghuwaiya2016formation}. The defenders aim to herd each of these attackers to one of the safe areas in $\calS=\{\calS_1,\calS_2,...,\calS_{N_s}\}$ before they reach $\calP$. We consider the following problems.
	
	\begin{problem}[Swarm Identification] \label{prob:1}
	Identify the swarms $\{\calA_{c_1},\calA_{c_2},...,\calA_{c_{N_{ac}}}\}$ of the attackers for some unknown $N_{ac}\ge 1$ such that attackers in the same swarm $\calA_{c_k}$, and only them, satisfy prescribed conditions on spatial proximity. 
	\end{problem}
	\begin{problem}[Multi-Swarm Herding]
	Find subgroups $\{\calD_{c_1},\calD_{c_2},...,\calD_{c_{N_{ac}}}\}$ of the defenders and their assignment to the attackers' swarms identified in Problem \ref{prob:1}, such that all the defenders in the same subgroup are connected via string barriers to enclose and herd the assigned attacker's swarm.
	\end{problem}

	\section{Herding a Single Swarm of Attackers}\label{sec:herding}
	To herd a swarm of attackers to $\calS$, we use `StringNet Herding', developed in\cite{chipade2020swarmherding}. StringNet is a closed net of strings formed by the defenders as shown in Fig. \ref{fig:clusterAssignment}. The strings are realized as impenetrable and extendable line barriers (e.g., spring-loaded pulley and a rope or other similar mechanism \cite{mirjan2016building}) that prevent attackers from passing through them. The extendable string barrier allows free relative motion of the two defenders connected  by the string. The string barrier can have a maximum length of $\bar{R}_{s}$. If the string barrier were to be physical one, then it can be established between two defenders $\calD_j$ and $\calD_{j'}$ only when they are close to each other and have almost same velocity, i.e., $\norm{\mathbf{r}_{dj}-\mathbf{r}_{dj'}}\le \underline{R}_s<\bar{R}_s$ and $\norm{\mathbf{v}_{dj}-\mathbf{v}_{dj'}}\le \epsilon$, where $\underline{R}_s$ and $\epsilon$ are small numbers. The underlying graph structure for the two different ``StringNet'' formations defined for a subset of defenders $\calD'=\{\calD_j \;|\; j \in I_d'\}$, where $I_d' \subseteq I_d$, are defined as follows:
	
	\begin{definition}[Closed-StringNet] \label{def:closed_StringNet} The Closed-StringNet $\calG^{s}_{cl}(I_d')= (\calV^s_{cl}(I_d'),$ $\calE^s_{cl}(I_d'))$ is a cycle graph consisting of: 1) a subset of defenders as the vertices, $\calV^s_{cl}(I_d')=\{\calD_j \;|\; j \in I_d'\}$, 2) a set of edges, $\calE^s_{cl}(I_d')=\{(\calD_j,\calD_{j'}) \in \calV^s_{cl}(I_d') \times \calV^s_{cl}(I_d') | \calD_j \overset{s} \longleftrightarrow \calD_{j'} \}$, where the operator $\overset{s} \longleftrightarrow$ denotes an impenetrable line barrier between the defenders.
	\end{definition}

\begin{definition}[Open-StringNet] The Open-StringNet $\calG^{s}_{op}(I_d')= (\calV^s_{op}(I_d'),$ $\calE^s_{op}(I_d'))$ is a path graph consisting of: 1) a set of vertices, $\calV^s_{op}(I_d')$ and 2) a set of edges, $\calE^s_{op}(I_d')$, similar to that in Definition \ref{def:closed_StringNet}.
\end{definition}
	
	The StringNet herding consists of four phases: 1) gathering, 2) seeking, 3) enclosing, and 4) herding to a safe area. These phases are discussed as follows.
	
	\subsection{Gathering}\label{sec:gatheting}
     We assume that the attackers start as single swarm that stays together, however, they may start splitting into smaller groups as they sense the defenders in their path. The aim of the defenders is to converge to an open formation $\scriptF_d^g$ centered at the gathering center $\mathbf{r}_{df^g}$ located on the
	expected path of the attackers, where the expected path is defined as the shortest path of
	the attackers to the protected area, before the attackers reach $\mathbf{r}_{df^g}$. Let $\scriptR_d(N_a):\bZ_{>0} \rightarrow \bZ_{>0}$ be the resource allocation function that outputs the number of the defenders that can be assigned to the given $N_a$ attackers. The open formation $\scriptF_d^g$ is characterized by
	the positions $\bm{\xi}_l^g$, for all 
	$l \in I_{d{c_0}}= \{1,2,...,\scriptR_d(N_a)\}$, as shown in Fig.~\ref{fig:clusterAssignment}. Once
	the defenders arrive at these positions, the defenders get
	connected by strings as follows: the defender
	at $\bm{\xi}_l^g$ gets connected to the defender at $\bm{\xi}_{l+1}^g$ for all
	$l \in \{1,2,...,\scriptR_d(N_a)-1\}$ (see Fig.~\ref{fig:clusterAssignment}). The formation $\scriptF_d^g$ is chosen to be a straight line formation as opposed to a semicircular formation\footnote{Completing a circular formation starting from a semicircular formation of the same radius is faster. However, the semicircular formation, for a given length constraint on the string barrier ($\bar{R}_s$), creates smaller blockage to the attackers as compared to the line formation. It is a trade-off between speed and effectiveness.} chosen in \cite{chipade2020swarmherding} to allow for the largest blockage in the path of the attackers. The angle made by the normal to the line joining $\bm{\xi}_1^g$ and $\bm{\xi}_{N_d}^g$ (clockwise from $\bm{\xi}_1^g$, see Fig.~\ref{fig:clusterAssignment}) is the orientation of the formation. The formation $\scriptF_d^g$ is chosen such that its orientation is toward the attackers on their expected path (defined above), see the blue formation in Fig~\ref{fig:clusterAssignment}.
	The desired positions $\bm{\xi}_l^g$ on $\scriptF_d^g$ centered at the gathering center $\mathbf{r}_{df^g}$ are:
	\be 
	\arraycolsep=1.4pt
	\baa
	\bm{\xi}_{l}^g=\mathbf{r}_{df^g} + \hat{R}_{l} \hat{\mathbf{o}} (\theta_{df^g}+\frac{\pi}{2}), \quad \text{for all } l \in I_{d{c_0}};
	\eaa
	\ee
     where $\hat{R}_{l}= \hat{R}_{d}^{d,g} \left(\frac{N_d-2l+1}{2} \right)$, $\hat{\mathbf{o}}(\theta)=[ \cos(\theta), \; \sin(\theta) ]^T$ is the unit vector making an angle $\theta$ with $x$-axis,
	 $\theta_{df^g}=\theta_{a_{cm}}^*+\pi$, where $\theta_{a_{cm}}^*$ is the angle made by the line segment joining the attackers' center of mass (ACoM) to the center of the protected area (the shortest path from the initial position of ACoM to $\calP$) with $x$-axis. These positions are static, i.e., $\dot{\bm{\xi}}_{l}^g=\ddot{\bm{\xi}}_{l}^g=\mathbf{0}$. The gathering center  $\mathbf{r}_{df^g} = \rho_{df}^g \hat{\mathbf{o}}(\theta_{df^g}) $ is such that $\rho_{df}^g> \rho_p$.
	 
	\begin{figure}[ht]
	\centering
	\includegraphics[width=.8\linewidth,trim={5.3cm 1.2cm 4.4cm 3.15cm},clip]{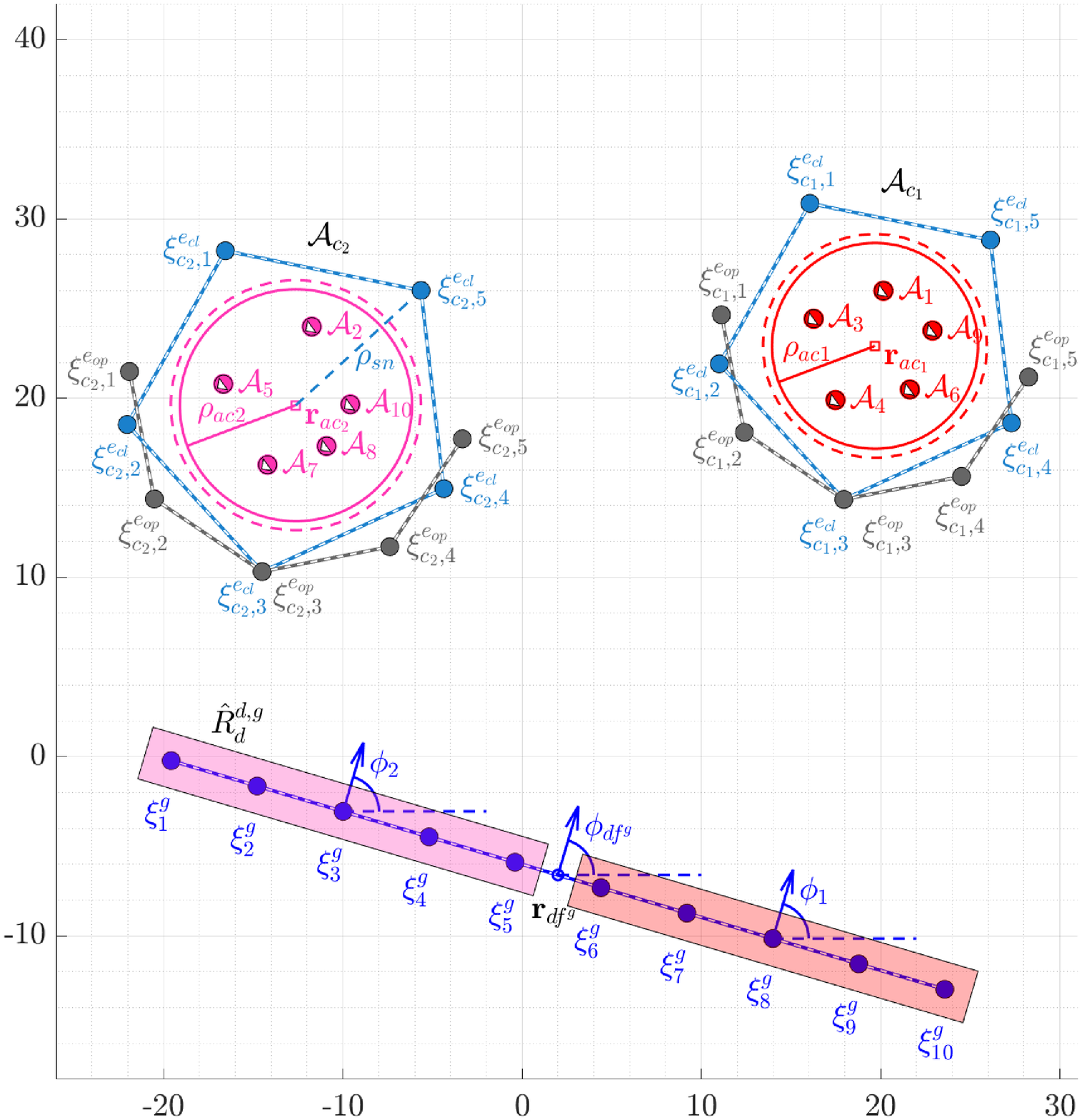}
	\caption{Assignment of defenders to the attackers' swarms}
	\label{fig:clusterAssignment}
    \end{figure}
    We define the defender-goal assignment as:
	 \begin{definition}[Defender-Goal Assignment]
	 A bijective mapping $\bm{\beta}_0:\{1,2,...,\scriptR_d(N_a)\} \rightarrow I_d$ such that the defender $\calD_{\bm{\beta}_0(l)}$ is assigned to go to the goal $\bm{\xi}_{l}^g$.
	 \end{definition}
	As discussed in \cite{chipade2020swarmherding}, we design a time-optimal motion plan so that the defenders converge to the formation $\scriptF_d^g$ as early as possible. Given initial positions for the $N_d$ defenders, and desired goal positions on the formation $\scriptF_d^g$, we recursively solve a mixed integer quadratic program (MIQP) using bisection method to find: 1) the best gathering center, if feasible, and 2) the best defender-goal assignment. 
	The MIQP finds the best defender-goal assignment by using: 1) the time information of the time-optimal trajectories obtained for each defender to go from its initial position to any goal position $\bm{\xi}_l^g$ under bounded acceleration \cite{chipade2020swarmherding}, and 2) the information of collision of all pairs of the time-optimal trajectories. The bisection method is then used to find the best gathering center by comparing the maximum time for the defenders obtained from the MIQP and the minimum time required by the attackers to reach the gathering center.

	\subsection{Seeking}
	After the defenders accomplish gathering, suppose a group of defenders $\calD_{c_k}=\{\calD_j |j \in I_{dc_k}\}$, $I_{dc_k} \subseteq I_d$, is tasked to herd a swarm of attackers $\calA_{c_k}=\{\calA_i |i \in I_{ac_k}\}$, $I_{ac_k} \subseteq I_a$, the details are discussed later in Section \ref{sec:multi_swarm_herding}. Let 
	$\bm{\beta_{k}}:\{1,2,...,|\calD_{c_k}|\}\rightarrow I_{dc_k}$ be the mapping that gives the indexing order of the defenders in $\calD_{c_k}$ on the Open-StringNet line formation $\scriptF_{dc_k}^{s}$ (similar to $\scriptF_{d}^{g}$).
	In the seeking phase, the defenders in $\calD_{c_k}$ maintain the line formation $\scriptF_{dc_k}^{s}$ and try to get closer to the swarm of attackers $\calA_{c_k}$ by using state-feedback, finite-time convergent, bounded control laws as discussed in \cite{chipade2020swarmherding}. The control actions as derived in \cite{chipade2020swarmherding} for the defenders in $\calD_{c_k}$ are modified to incorporate collision avoidance from the other StringNet formations by $\calD_{c_{k'}}$, for $k'\neq k$.
	
	\subsection{Enclosing: Closed-StringNet formation}
Once the Open-StringNet formation reaches close to the attackers' formation, the enclosing phase begins in which the defenders start enclosing the attackers by moving to their desired positions on the enclosing formations while staying connected to their neighbors. We choose two formations for this phase that the defenders sequentially achieve: 1) Semi-circular Open-StringNet formation ($\scriptF_{dc_k}^{e_{op}}$), 2) Circular Closed-StringNet formation ($\scriptF_{dc_k}^{e_{cl}}$). When the defenders directly try to converge to a circular formation from a line formation during this phase, the defenders at the either end of the Open-StringNet formation will start coming closer to each other reducing the length of the overall barrier in the attackers' path significantly. This is because the desired positions of these terminal defenders in the circular formation would be very close to each other on the opposite side of the circular formation (see Fig.~\ref{fig:clusterAssignment}) and collision avoidance part of the controller is only active locally near the circle of maximum radius $\bar{\rho}_{ac_k}$ around the swarm $\calA_{c_k}$. So the defenders would first converge to a semi-circular formation and would converge to a circular formation after the former is achieved.
	
	The desired position $\bm{\xi}_{c_k,l}^{e_{op}}$ on the Open-StringNet formation $\scriptF_{dc_k}^{e_{op}}$ (Fig.~\ref{fig:clusterAssignment}) is chosen on the circle with radius $\rho_{sn_k}$ centered at $\mathbf{r}_{\hat{ac}_k}$ as:
	\be \label{eq:goal_in_OpenStringNet}
	\arraycolsep=1.4pt
	\baa
	\bm{\xi}_{c_k,l}^{e_{op}}=\mathbf{r}_{\hat{ac}_k} + \rho_{sn_k} \hat{\mathbf{o}}(\theta_{l}) \text{, where }
	\theta_{l} = \theta_{df_k}^{e*}+\frac{\pi}{2}+\frac{\pi(l-1)}{|\calD_{c_k}|-1},
	\eaa
	\ee
	for all $l \in \{1,2,...,|\calD_{c_k}|\}$, where $\theta_{df_k}^{e*}=\theta_{df_k}^{s*}$. The center $\mathbf{r}_{\hat{ac}_k}=\mathbf{r}_{ac_k}+\tilde{\mathbf{r}}_{\hat{ac}_k}$, where $\tilde{\mathbf{r}}_{\hat{ac}_k}$ is the position of the centroid of the convex hull of the position coordinates of the attackers in $\calA_{c_k}$ relative to the center of mass $\mathbf{r}_{ac_k}=\sum_{i \in I_{ac_k}} \frac{\mathbf{r}_{ai}}{|\calA_{c_k}|}$ of $\calA_{c_k}$ at the latest time when the swarm $\calA_{c_k}$ was identified. 
	The radius $\rho_{sn_k}$ should satisfy, $\bar{\rho}_{ac_k} +b_d < \rho_{sn_k}$, where $\bar{\rho}_{ac_k}$ is maximum radius of swarm $\calA_{c_k}$. The parameter $b_d$ is the tracking error for the defenders in this phase \cite{chipade2020swarmherding}. 
	
	Similarly, the desired positions $\bm{\xi}_{c_k,l}^{e_{cl}}$ on the Closed-StringNet formation $\scriptF_{dc_k}^{e_{cl}}$ same as in Eq.~\ref{eq:goal_in_OpenStringNet} with $
	\theta_{l} = \theta_{df_k}^{e*}+\frac{\pi(2l-1)}{|\calD_{c_k}|}$,
	for all $l \in \{1,2,...,|\calD_{c_k}|\}$.
	Both the formations move with the same velocity as that of the attackers' center of mass, i.e., $\dot{\bm{\xi}}_{c_k,l}^{e_{op}}=\dot{\bm{\xi}}_{c_k,l}^{e_{cl}}=\dot{\mathbf{r}}_{ac_k}$. 
	
	The defenders $\calD_{c_k}$ first track the desired goal positions $\bm{\xi}_{c_k,l}^{e_{op}}$ by using the finite-time convergent, bounded control actions given in \cite{chipade2020swarmherding}. Once the defender $\calD_{\bm{\beta}_{k}(1)}$ and $\calD_{\bm{\beta}_{k}(|\calD_{c_k}|)}$
 reach within a distance of $b_d$ from $\bm{\xi}_{c_k,1}^{e_{op}}$ and $\bm{\xi}_{c_k,|\calD_{c_k}|}^{e_{op}}$, i.e., $\norm{\mathbf{r}_{d\bm{\beta}_{k}(1)}-\bm{\xi}_{c_k,1}^{e_{op}}}<b_d$ and $\norm{\mathbf{r}_{d\bm{\beta}_{k}(|\calD_{c_k}|)}-\bm{\xi}_{c_k,|\calD_{c_k}|}^{e_{op}}}<b_d$, respectively, the desired goal positions are changed from $\bm{\xi}_{c_k,l}^{e_{op}}$ to $\bm{\xi}_{c_k,l}^{e_{cl}}$
 for all $l \in \{1,2,...,|\calD_{c_k}|\}$. 
The StringNet is achieved when $\norm{\mathbf{r}_{d\bm{\beta}_{k}(l)}-\mathbf{\bm{\xi}}_{c_k,l}^{e_{cl}}} \le b_d$ for all $l \in \{1,2,...,|\calD_{c_k}|\}$ during this phase.
	\subsection{Herding: moving the Closed-StringNet to safe area}
	Once a group of defenders $\calD_{c_k}=\{\calD_j| j \in I_{dc_k} \}$, for $I_{dc_k}\subseteq I_d$, forms a StringNet around a swarm of attackers, they move while tracking a desired rigid closed circular formation $\mathscr{F}_{dc_k}^h$ centered at a virtual agent $\mathbf{r}_{df_k^h}$ as discussed in \cite{chipade2020swarmherding}. The swarm is herded to the closest safe area $S_{\varsigma(k)}$, where $\varsigma(k)=\ds \argmin_{m \in I_s} \norm{\mathbf{r}_{df_k^h}-\mathbf{r}_{sm }}$.

	\section{Multi-Swarm Herding}\label{sec:multi_swarm_herding}
 We consider that the attackers split into smaller groups as they sense the defenders in their path, to maximize the chance of at least some attackers reaching the protected area by circumnavigating the oncoming defenders. 
 To respond to such strategic movements of the attackers, the defenders need to collaborate intelligently. In the approach presented in this paper, the defenders first identify the spatial clusters of the attackers. Then, the defenders distribute themselves into smaller connected groups, subsets of defenders that have already established an Open-StringNet formation, in order to herd these different spatial clusters (swarms) of the attackers to safe areas. In the next subsections, we discuss the clustering and the defender to swarm assignment algorithms.

\subsection{Identifying Swarms of the Attackers} \label{sec:clustering}
In order to identify the spatially distributed clusters (swarms) of the attackers, the defenders utilize the Density Based Spatial Clustering of Applications with Noise (DBSCAN) algorithm \cite{ester1996density}. Given a set of points, DBSCAN algorithm finds clusters of high density points (points with many nearby neighbors), and marks the points as outliers if they lie alone in low-density regions (whose nearest neighbors are too far away). DBSCAN algorithm can identify clusters of any shape in the data and requires two parameters that define the density of the points in the clusters: 1) $\varepsilon_{nb}$ (radius of the neighborhood of a point), 2) $m_{pts}$ (minimum number of points in $\varepsilon_{nb}$-neighborhood of a point). In general, attackers can split into formations with varied range of densities making the choice of the parameters $\varepsilon_{nb}$ and $m_{pts}$ challenging.
Variants of the DBSCAN algorithm, such as OPTICS \cite{ankerst1999optics}, can find clusters of varying density, however, they are more time consuming. To keep computational demands low, we use the DBSCAN algorithm with fixed parameters $\varepsilon_{nb}$ and $m_{pts}$, which quickly yields useful clustering information about the attackers satisfying a specified connectivity constraints. 

The neighborhood of an attacker is defined using weighted distance between two attackers: $d(\mathbf{x}_{ai},\mathbf{x}_{ai'})=\sqrt{(\mathbf{x}_{ai}-\mathbf{x}_{ai'})^T \mathbf{M} (\mathbf{x}_{ai}-\mathbf{x}_{ai'})}$, where $\mathbf{x}_{ai}=[\mathbf{r}_{ai}^T,\mathbf{v}_{ai}^T]^T$ and $\mathbf{M}$ is a weighing matrix defined as $\mathbf{M}=diag([1,1,\varphi, \varphi])$, where $\varphi$ weights relative velocity against relative position. We choose $\varphi<1$ because relative position is more important in a spatial cluster than the velocity alignment at a given time instance. The $\varepsilon_{nb}$-neighborhood of an attacker $\calA_i$ is then defined as the set of points $\mathbf{x} \in \bR^4$ such that $d(\mathbf{x}_{ai},\mathbf{x})<\varepsilon_{nb}$.

The largest circle inscribed in the largest Closed-StrignNet formation formed by the $N_d$ defenders has radius $\bar{\rho}_{ac} =\frac{\bar{R}_s}{2} \cot(\frac{\pi}{N_d})$. Maximum radius of any cluster with $N_a$ points identified by DBSCAN algorithm with parameters $\varepsilon_{nb}$ and $m_{pts}$ is $\frac{\varepsilon_{nb}(N_a-1)}{m_{pts}-1}$. If all of the attackers were to be a single swarm enclosed inside the region with radius $\bar{\rho}_{ac}$ then we would require $\varepsilon_{nb}$ to be greater than $\frac{\bar{\rho}_{ac} (m_{pts}-1) }{N_a-1}$ in order identify them as a single cluster. So we choose $\varepsilon_{nb}=\frac{\bar{\rho}_{ac} (m_{pts}-1) }{N_a-1}$ and since we want to identify even clusters with as low as 3 agents we need to choose $m_{pts}=3$. With this parameters for DBSCAN algorithm, we have:

\begin{lemma}
Let $\{\calA_{c_1},\calA_{c_2},...,\calA_{c_{N_{ac}}}\}$ be the clusters identified by DBSCAN algorithm with $\varepsilon_{nb} =  \frac{\bar{\rho}_{ac} }{N_a-1} \floor{\frac{m_{pts}}{2}}$. For all $k \in I_{ac}=\{1,2,...,N_{ac}\}$, we have  $\rho_{ac_k}=\max_{i \in I_{ac_k}} \norm{\mathbf{r}_{ai}-\mathbf{r}_{\hat{ac}_k}} \le \frac{\bar{R}_s}{2}\cot\left(\frac{\pi}{|\calA_{c_k}|}\right)$, if $|\calA_{c_k}|>3$ and $N_a=N_d$.
\end{lemma}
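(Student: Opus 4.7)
The plan is to invoke the DBSCAN maximum-radius estimate stated just above the lemma, plug in the chosen parameters together with the hypothesis $N_a=N_d$, and reduce the claim to a single trigonometric inequality that can be dispatched by elementary bounds.

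First I would apply the preceding estimate: for any cluster $\calA_{c_k}$ of size $n_k := |\calA_{c_k}|$ returned by DBSCAN,
\be
\rho_{ac_k} \le \frac{\varepsilon_{nb}(n_k-1)}{m_{pts}-1}.
\ee
Substituting $m_{pts}=3$ (so $\floor{m_{pts}/2}=1$ and $m_{pts}-1=2$) together with the lemma's $\varepsilon_{nb}=\frac{\bar{\rho}_{ac}}{N_a-1}$ gives $\rho_{ac_k}\le \frac{\bar{\rho}_{ac}(n_k-1)}{2(N_a-1)}$. Then $N_a=N_d$ and the identity $\bar{\rho}_{ac}=\frac{\bar{R}_s}{2}\cot(\pi/N_d)$ (the inradius of the largest Closed-StringNet, i.e., the apothem of a regular $N_d$-gon with side length $\bar{R}_s$) rewrite the right-hand side as $\frac{\bar{R}_s(n_k-1)\cot(\pi/N_d)}{4(N_d-1)}$, so the lemma follows once one establishes
\be
(n_k-1)\cot(\pi/N_d) \;\le\; 2(N_d-1)\cot(\pi/n_k), \qquad 4\le n_k\le N_d.
\ee

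My strategy for this trigonometric estimate is to rewrite the ratio as
\be
\frac{(n_k-1)\cot(\pi/N_d)}{(N_d-1)\cot(\pi/n_k)} \;=\; \frac{\phi(n_k)\,\cot(\pi/N_d)}{N_d-1}, \qquad \phi(n):=(n-1)\tan(\pi/n),
\ee
and then chain three elementary bounds: (i) $\cot(\pi/N_d)\le N_d/\pi$, which follows from $\tan x\ge x$ on $(0,\pi/2)$; (ii) $N_d/(N_d-1)\le 2$ for $N_d\ge 2$; and (iii) $\phi(n)<\pi$ for every integer $n\ge 4$. Combining these gives $\phi(n_k)\cot(\pi/N_d)/(N_d-1) < \pi\cdot (N_d/\pi)/(N_d-1) = N_d/(N_d-1) \le 2$, which closes the lemma.

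The main obstacle is bound (iii). The case $n=4$ is direct since $\phi(4)=3<\pi$. For $n\ge 5$ I would recast $\phi(n)<\pi$ as $(n-1)\sin(\pi/n)<\pi\cos(\pi/n)$ and then combine $\sin x<x$ and $\cos x\ge 1-x^2/2$ with $x=\pi/n$; this reduces the inequality to $n\ge \pi^2/2\approx 4.93$, which holds whenever $n\ge 5$. Note that the hypothesis $|\calA_{c_k}|>3$ is used precisely here, as $\phi(3)=2\sqrt{3}>\pi$ would break the argument. Once (iii) is in place the remaining steps are routine substitution, and the only genuine content of the proof is this trigonometric estimate.
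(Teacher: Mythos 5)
Your proof is correct. The paper itself states this lemma without any proof, so there is nothing to compare against except the setup in the surrounding text; your argument assembles exactly the ingredients the paper introduces (the DBSCAN radius bound $\varepsilon_{nb}(n_k-1)/(m_{pts}-1)$, the inradius $\bar{\rho}_{ac}=\tfrac{\bar{R}_s}{2}\cot(\pi/N_d)$, and $N_a=N_d$) and then supplies the missing trigonometric verification. I checked the reduction to $(n_k-1)\cot(\pi/N_d)\le 2(N_d-1)\cot(\pi/n_k)$ and the three bounds (i)--(iii), including the $n\ge \pi^2/2$ step for $\phi(n)<\pi$; all are sound for $4\le n_k\le N_d$. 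Two small points worth noting: you correctly used the lemma's $\varepsilon_{nb}=\tfrac{\bar{\rho}_{ac}}{N_a-1}\lfloor m_{pts}/2\rfloor$ rather than the inconsistent value $\tfrac{\bar{\rho}_{ac}(m_{pts}-1)}{N_a-1}$ given in the paragraph above the lemma (with the latter the factor of $2$ disappears and the inequality actually fails, e.g.\ at $n_k=4$, $N_d=5$); and your remark about $|\calA_{c_k}|>3$ is accurately scoped --- $\phi(3)=2\sqrt{3}>\pi$ breaks your bound chain, though not necessarily the lemma's conclusion itself.
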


As the number of attackers increases, the computational cost for DBSCAN becomes higher and looses its practical usefulness. Furthermore, the knowledge of the clusters is only required by the defenders when a swarm of attackers does not satisfy the assumed constraint on its connectivity radius. So the DBSCAN algorithm is run only for swarms of attackers $\calA_{c_k}$ for some $k \in I_{ac}$ whenever the connectivity constraint is violated by them i.e., when the radius of the swarm of attackers $\calA_{c_k}$ defined as $\rho_{ac_k}=\max_{i \in I_{ac_k}} \norm{\mathbf{r}_{ai}-\mathbf{r}_{\hat{ac}_k}}$ exceeds the value $\bar{\rho}_{ac_k}=\frac{\bar{R}_s}{2}\cot\left(\frac{\pi}{N_d}\right) \frac{|\calA_{c_k}|-1}{N_a-1}$.

	\subsection{Defender Assignment to the Swarms of Attackers}\label{sec:defender_cluster_assignment}
	As the initial swarm of attackers splits into smaller swarms, the defenders must distribute themselves into smaller groups and assign the attackers' swarms (clusters) to these groups in order to enclose these swarms and subsequently herd them to the closest safe area. Let $\calA_c=\{\calA_{c_1},\calA_{c_2},\dots, \calA_{c_{N_{ac}}}\}$ be a set of swarms of the attackers after a split event has happened at time $t_{se}$. We assume that none of the swarms in $\calA_c$ is a singular one (i.e., a swarm with less than three agents), $|\calA_{c_k}|>2$ for all $k \in I_{ac}=\{1,2,...,N_{ac}\}$. We formally define the defender to attackers' swarm assignment as: 
	 \begin{definition}[Defender-Swarm Assignment]
		A set $\bm{\beta}$ $=\{\bm{\beta}_1,\bm{\beta}_2,...\bm{\beta}_{N_{ac}} \}$ of mappings $\bm{\beta}_k: \{1,2,...,$ $\scriptR_d(|\calA_{c_k}|)\} \rightarrow I_d$, where $\bm{\beta}_k$ gives the indices of the defenders assigned to the swarm $\calA_{c_k}$ for all $k \in I_{ac}$.
	\end{definition}
	
	We consider an optimization problem to find the best defender-swarm assignment as:
	\be \label{eq:swarm_assignment_problem}
	\baa 
\bm{\beta}^\star = 	\text{argmin} & \ds \sum_{k=1}^{N_{ac}} \sum_{j'=1}^{\scriptR_d(|\calA_{c_k}|)} \norm{\mathbf{r}_{\hat{ac}_k}-\mathbf{r}_{d\bm{\beta}_k (j')}}\\
	\text{Subject to} &
	 (\calD_{\bm{\beta}_k(j')},\calD_{\bm{\beta}_k(j'-1)}) \in \calE^s_{op}(I_d), \\
	&\forall j' \in \{2, ..., \scriptR_d(|\calA_{c_k}|)\},\forall k \in I_{ac}.
	\eaa	
	\ee
	The optimization cost is the sum of distances of the defenders from the centers of the attackers' swarms to which they are assigned. This ensures that the collective effort needed by all the defenders is minimized when enclosing the swarms of the attackers. The constraints in Eq.~\eqref{eq:swarm_assignment_problem} require that all the defenders that are assigned to a particular swarm of the attackers are neighbors of each other, are already connected to each other via string barriers and the underlying graph is an Open-StringNet. Assuming $N_d=N_a$, we choose $\scriptR_d(|\calA_{c_k}|)=|\calA_{c_k}|$, i.e., the number of defenders assigned to a swarm $\calA_{c_k}$ is equal to the number of attackers in $\calA_{c_k}$. This is to ensure that there are adequate number of defenders to go after each attacker in the event the attackers in swarm $\calA_{c_k}$ disintegrate into singular swarms\footnote{In this case, herding may not be the most economical way of defense. How to handle the situations with singular swarms is out of the scope of this paper and will be studied in the future work.}. 
	
	This assignment problem is closely related to generalized assignment problem (GAP) \cite{oncan2007survey}, in which $n$ objects are to be filled in $m$ knapsacks $(n\ge m)$.  This problem is modeled as a GAP with additional constraints on the objects (defenders) that are assigned to a given knapsack (attackers' swarm). We call this constrained assignment problem as connectivity constrained generalized assignment problem (C2GAP) and provide a mixed integer quadratically constrained program	(MIQCP) to find the optimal assignment as: 
	\setlength{\abovedisplayskip}{10pt}
	\bse \label{eq:defender_swarm_assign_MIQCP}
	\begin{align}
	\text{Minimize } &  J=  \textstyle \sum_{k=1}^{N_{ac}} \sum_{j=1}^{N_d} \norm{\mathbf{r}_{\hat{ac}_k}-\mathbf{r}_{dj}}\delta_{jk} \label{eq:MIQCP_cost}\\
	\vspace{2mm}
	\text{Subject to } & \scriptstyle \sum_{k \in I_{ac}} \delta_{jk}=1, \quad \forall j \in I_d; \label{eq:MIQP_constraint_1}\\
	\vspace{2mm}
    & \scriptstyle \sum_{j \in I_{d}} \delta_{jk}=\scriptR_d(|\calA_{c_k}|), \quad \forall k \in I_{ac}; \label{eq:MIQP_constraint_2}\\
	\vspace{2mm}
	&\scriptstyle \sum_{j \in \tilde{I}_d} \delta_{jk}\delta_{(j+1)k}\ge \scriptR_d(|\calA_{c_k}|)-1, \quad \forall k \in I_{ac}; \label{eq:MIQP_constraint_3}\\
	& \scriptstyle \sum_{k \in I_{ac}} \sum_{j \in {I_d}} \delta_{jk}=\scriptR_d(N_a); \label{eq:MIQP_constraint_4}\\
	& \scriptstyle \delta_{jk}\in \{0,1\}, \quad \forall j \in I_d, k \in I_{ac};
	\end{align}
	\ese
	where $\tilde{I}_d=I_d-\{N_d\}$, $\delta_{jk}$ is a decision variable which is equal to 1 when the defender $\calD_j$ is assigned to the swarm $\calA_{c_k}$ and 0 otherwise. The constraints \eqref{eq:MIQP_constraint_1} ensure that each defender is assigned to exactly one swarm of the attackers, the  capacity constraints \eqref{eq:MIQP_constraint_2} ensure that for all $k \in I_{ac}$ swarm $\calA_{c_k}$ has exactly $\scriptR_d(|\calA_{c_k}|)$ defenders assigned to it, the quadratic constraints \eqref{eq:MIQP_constraint_3} ensure that all the defenders assigned to swarm $\calA_{c_k}$ are connected together with an underlying Open-StringNet for all $k \in I_{ac}$ and the constraint \eqref{eq:MIQP_constraint_4} ensures that all the $\scriptR_d(N_a)$ defenders are assigned to the attackers' swarms. This MIQCP can be solved using a MIP solver Gurobi \cite{gurobi}.
  As shown in an instance of the defender-swarm assignment in Fig.~\ref{fig:clusterAssignment}, the defenders at $\bm{\xi}_{l}^g$ for $l \in  \{1,2,...,5\}$ are assigned to swarm $\calA_{c_2}$ and those at $\bm{\xi}_{l}^g$ for $l \in \{6,7,...,10\}$ are assigned to swarm $\calA_{c_1}$.

\subsection{Hierarchical Approach to defender-swarm assignment}
Finding the optimal defender-swarm assignment by solving the MIQCP discussed above may not be real-time implementable for a large number of agents $(>100)$. In this section, we develop a computationally efficient hierarchical approach to find defender-swarm assignment. A large dimensional assignment problem is split into smaller, low-dimensional assignment problems that can be solved optimally and quickly. Algorithm \ref{alg:defender_cluster_assignment} provides the steps to reduce the problem of size $N_{ac}$ to smaller problems of size smaller than or equal $\underline{N}_{ac}(<N_{ac})$.
\begin{algorithm}[h]
		\caption{Defender-Swarm Assignment}
		\label{alg:defender_cluster_assignment}
		\SetKwFunction{assignHierarchical}{assignHierarchical}
		\SetKwFunction{assignMIQCP}{assignMIQCP}
		\SetKwFunction{splitEqual}{splitEqual}
		\SetKwProg{Fn}{Function}{:}{}
        \Fn{\assignHierarchical{$\scriptA, \scriptD$}}{
		\eIf{$\scriptA.N_{ac}>\underline{N}_{ac}$}{[$\scriptA^l,\scriptD^l,\scriptA^r,\scriptD^r$]=\splitEqual($\scriptA,\scriptD)$;\\
		\eIf{$\scriptA^l.N_{ac}>\underline{N}_{ac}$}{
		$\bm{\beta}^l=$\assignHierarchical($\scriptA^l, \scriptD^l$);}
		{$\bm{\beta}^l=$\assignMIQCP($\scriptA^l,\scriptD^l$);}
		\eIf{$\scriptA^r.N_{ac}>\underline{N}_{ac}$}{
		$\bm{\beta}^r=$\assignHierarchical($\scriptA^r, \scriptD^r$);}
		{$\bm{\beta}^r=$\assignMIQCP($\scriptA^r, \scriptD^r$);}
		$\bm{\beta}=\{\bm{\beta}^l,\bm{\beta}^r\};$
		}
		{$\bm{\beta}$=\assignMIQCP($\mathbf{r}_{ac}$,$\mathbf{r}_d$);}
		\Return{$\bm{\beta} =\{\bm{\beta}_1,\bm{\beta}_2,...,\bm{\beta}_{N_{ac}}\}$}
		}
		\textbf{End Function}
	\end{algorithm}
	In Algorithm \ref{alg:defender_cluster_assignment}, $\scriptA$ is a data structure that stores the information of: centers of the attackers' swarms $\mathbf{r}_{ac}=[\mathbf{r}_{\hat{ac}_1}, \mathbf{r}_{\hat{ac}_2},...,\mathbf{r}_{\hat{ac}_{N_{ac}}}]$, numbers of the attackers in each swarm $\mathbf{n}_{ac}=[|\calA_{c_1}|,|\calA_{c_2}|,...,|\calA_{c_{N_{ac}}}|]$, total number of attackers $N_a$; and $\scriptD$ is a data structure that stores the information of: defenders' positions $\mathbf{r}_{d}=\{\mathbf{r}_{dj}| j \in I_d'\}$, and the goal assignment $\bm{\beta}$. \splitEqual function splits the attackers into two groups $\scriptA^l$ and $\scriptA^r$ of roughly equal number of attackers and the defenders into two groups $\scriptD^l$ and $\scriptD^r$. The split is performed based on the angles $\psi_{k}$ made by relative vectors $\mathbf{r}_{\hat{ac}_k}-\mathbf{r}_{dc}$, for all $k \in I_{ac}$, with the vector $\mathbf{r}_{\hat{ac}_k}-\mathbf{r}_{dc}$ where $\mathbf{r}_{dc}$ is the center of $\mathbf{r}_d$. We first arrange these angles $\psi_k$ in descending order. The first few clusters in the arranged list with roughly half the total number of attackers become the left group $\scriptA^l$ and the rest become the right group $\scriptA^r$. Similarly, the left group $\scriptD^l$ is formed by the first $\scriptA^l.N_a$ defenders as per the assignment $\bm{\beta}$ and the rest defenders form the right group $\scriptD^r$. We assign the defenders in $\scriptD^l$ only to the swarms in $\scriptA^l$ and those in $\scriptD^r$ only to the swarms in $\scriptA^r$. By doing so we may or may not obtain an assignment that minimizes the cost in \eqref{eq:MIQCP_cost} but we reduce the computation time significantly and obtain a reasonably good assignment quickly. As in Algorithm \ref{alg:defender_cluster_assignment}, the process of splitting is done recursively until the number of attackers' swarms is smaller than a pre-specified number $\underline{N}_{ac}$. The function \assignMIQCP finds the defender-swarm assignment by solving \eqref{eq:defender_swarm_assign_MIQCP}. As shown in Figure~\ref{fig:runTimeAssign}, the average computation time over a number of cluster configurations and initial conditions for the hierarchical approach to assignment is significantly smaller than that of the MIQCP formulation and also the cost of the hierarchical algorithm is very close to the optimal cost (MIQCP), see Fig.~\ref{fig:assignCostError}.
	\begin{figure}[ht]
	\centering
	\includegraphics[width=.98\linewidth,trim={.6cm 0cm .4cm .75cm},clip]{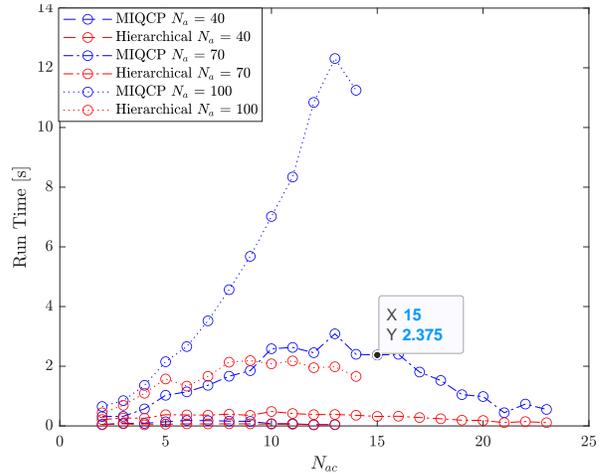}
	\caption{Run-time for assignment algorithms}
	\label{fig:runTimeAssign}
    \end{figure}
    	\begin{figure}[ht]
	\centering
	\includegraphics[width=.98\linewidth,trim={.6cm 0cm .4cm .75cm},clip]{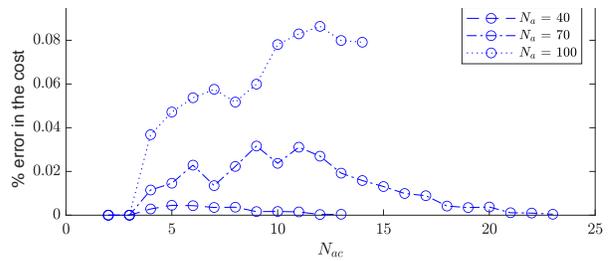}
	\caption{\% Error in the costs of the assignment algorithms}
	\label{fig:assignCostError}
    \end{figure}



	\section{Simulations}\label{sec:simulations} 
	We provide a simulation of 18 defenders herding 18 attackers to $\calS$ with bounded control inputs. Figure~\ref{fig:multiSwarmHerd} shows the snapshots of the paths taken by all agents. The positions and paths of the defenders are shown in blue color, and that of the attackers in red. The string-barriers between the defenders are shown as wide solid blue lines with white dashes in them.
	
	Snapshot 1 shows the paths during the gathering phase. As observed the defenders are able to gather at a location on the shortest path of the attackers to the protected area before the attacker reach there. Five attackers are already separated from the rest thirteen in reaction to the incoming defenders in their path. The defenders have identified two swarms of the attackers $\calA_{c_1}$ and $\calA_{c_2}$ at the end of the gathering phase and assign two subgroups $\calD_{c_1}$ and $\calD_{c_2}$ of the defenders to $\calA_{c_1}$ and $\calA_{c_2}$ using Algorithm \ref{alg:defender_cluster_assignment}. As shown in snapshot 2, $\calD_{c_1}$ and $\calD_{c_2}$ seek $\calA_{c_1}$ and $\calA_{c_2}$, but the attackers in swarm $\calA_{c_2}$ further start splitting and the defenders identify this newly formed $\calA_{c_2}$ and $\calA_{c_3}$ at time $t=120.11 s$. The group $\calD_{c_2}$ is then split into two subgroups $\calD_{c_2}$ and $\calD_{c_3}$ of appropriate sizes and assigned to the new swarms $\calA_{c_2}$ and $\calA_{c_3}$ using Algorithm \ref{alg:defender_cluster_assignment}. 
	
	Snapshot 3 shows how the 3 subgroups of the defenders are able to enclose the the identified 3 swarms of the attackers by forming Closed-StringNets around them. Snapshot 4 shows how all the three enclosed swarms of the attackers are taken to the respective closest safe areas while each defenders' group ensures collision avoidance from other defenders' groups. Additional simulations can be found at \href{https://drive.google.com/drive/folders/11qJxjlxR_AWbc4vicIRchZpvCaByCJGP?usp=sharing}{/drive/video}.
	\begin{figure*}[h]
		\centering
		\begin{subfigure}[h]{0.48\linewidth}
		\includegraphics[width=1\linewidth,trim={.1cm 0.2cm 2.cm .3cm},clip]{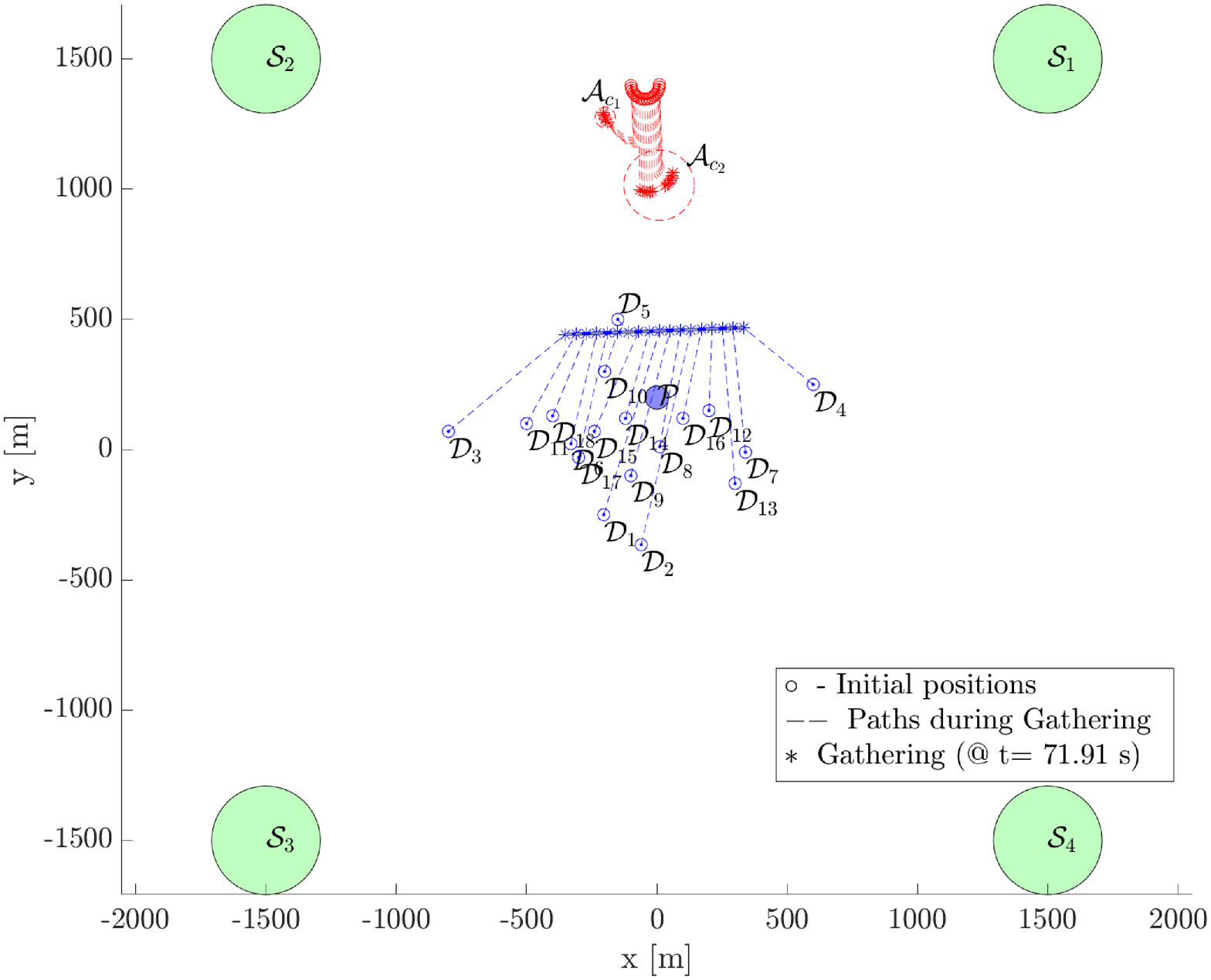}
			\label{fig:multiSwarmHerd1}
		\end{subfigure}	
		\begin{subfigure}[h]{0.48\textwidth}
		\includegraphics[width=1\linewidth,trim={.9cm 0.2cm 2.cm .65cm},clip]{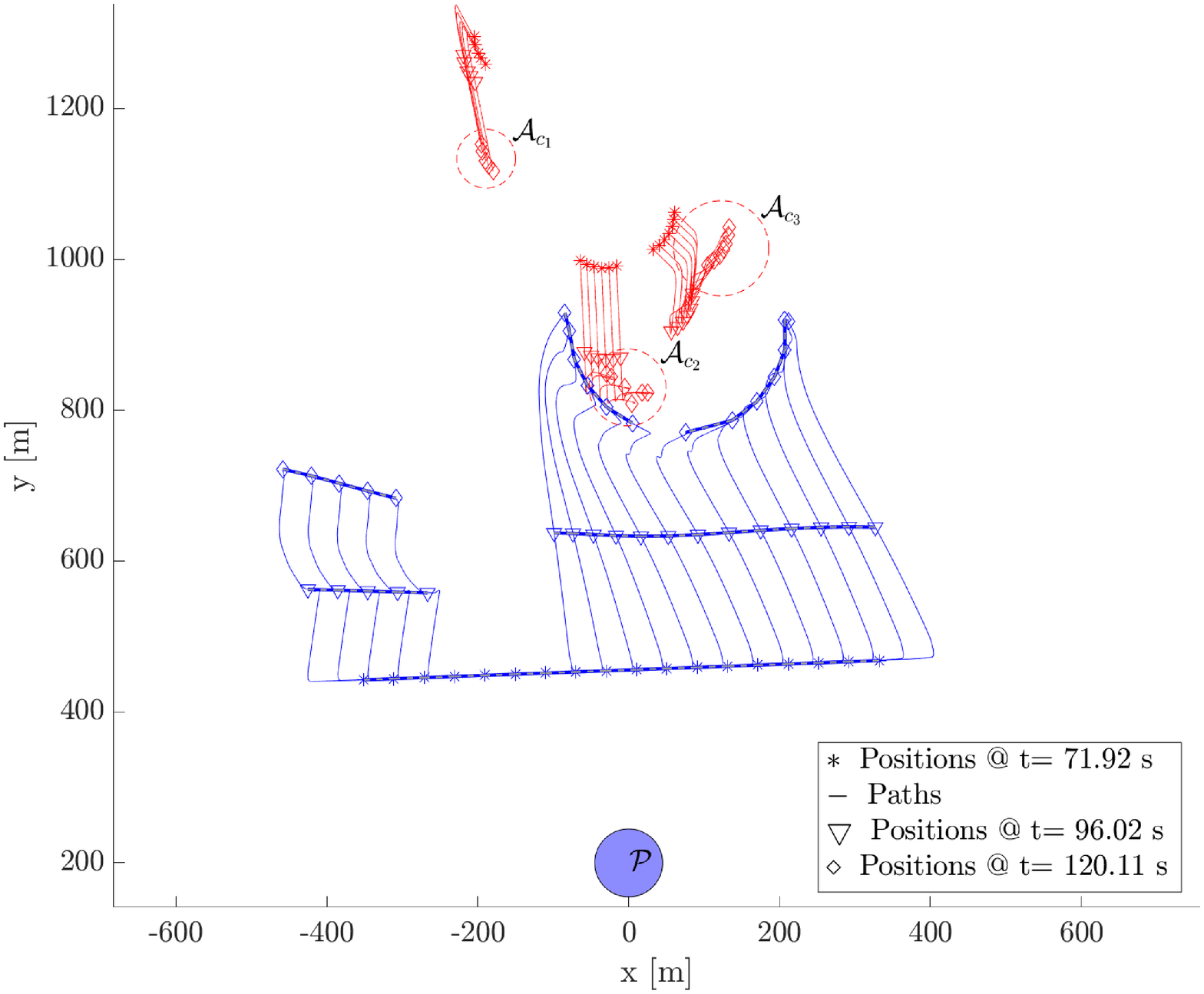}
		\label{fig:multiSwarmHerd2}
		\end{subfigure}
		\begin{subfigure}[h]{0.48\textwidth}
		\includegraphics[width=1\linewidth,trim={.9cm 0.2cm 2cm .85cm},clip]{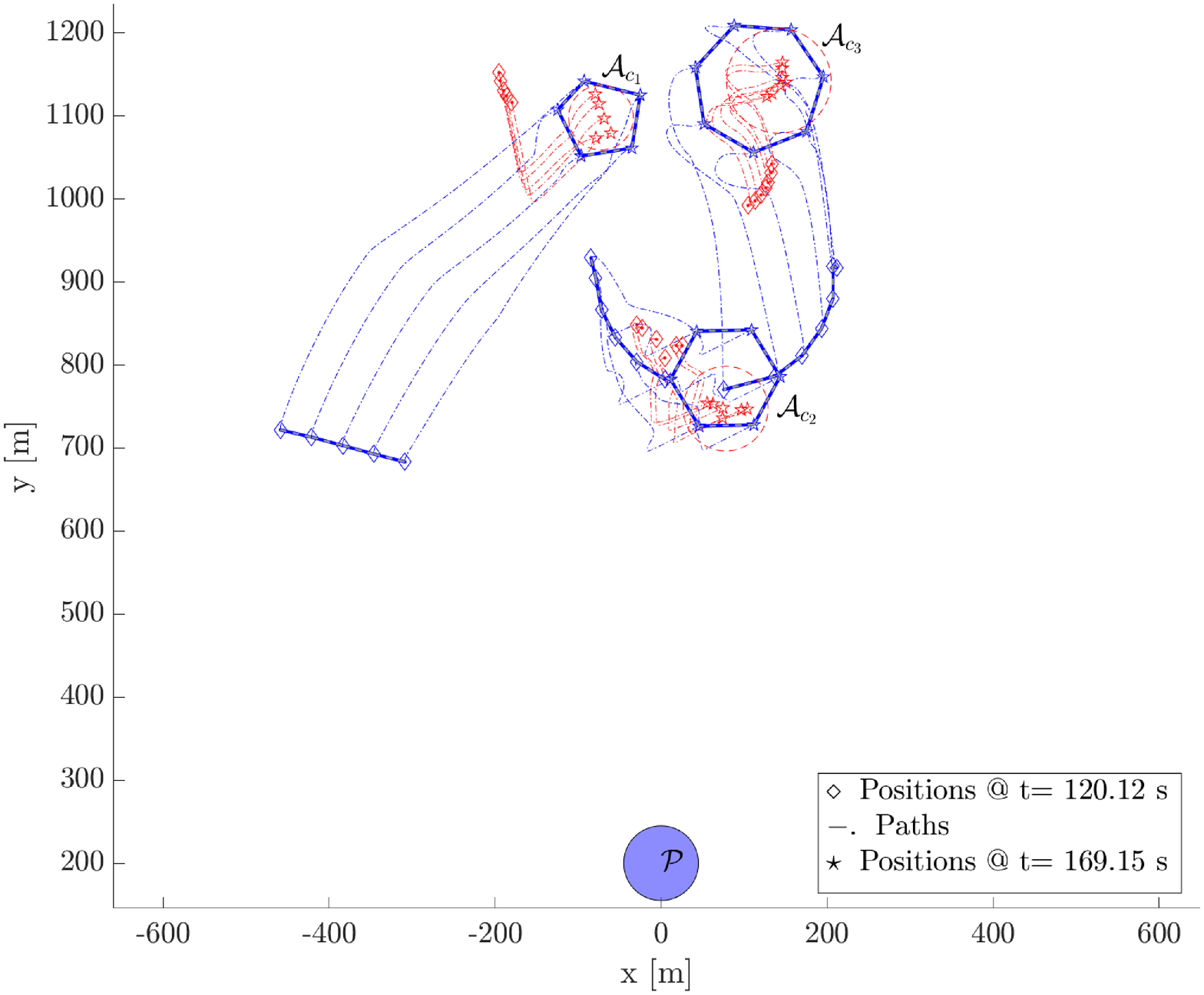}
		\label{fig:multiSwarmHerd3}
		\end{subfigure}
		\begin{subfigure}[h]{0.48\textwidth}
		\includegraphics[width=1\linewidth,trim={.9cm 0.2cm 2cm 1.25cm},clip]{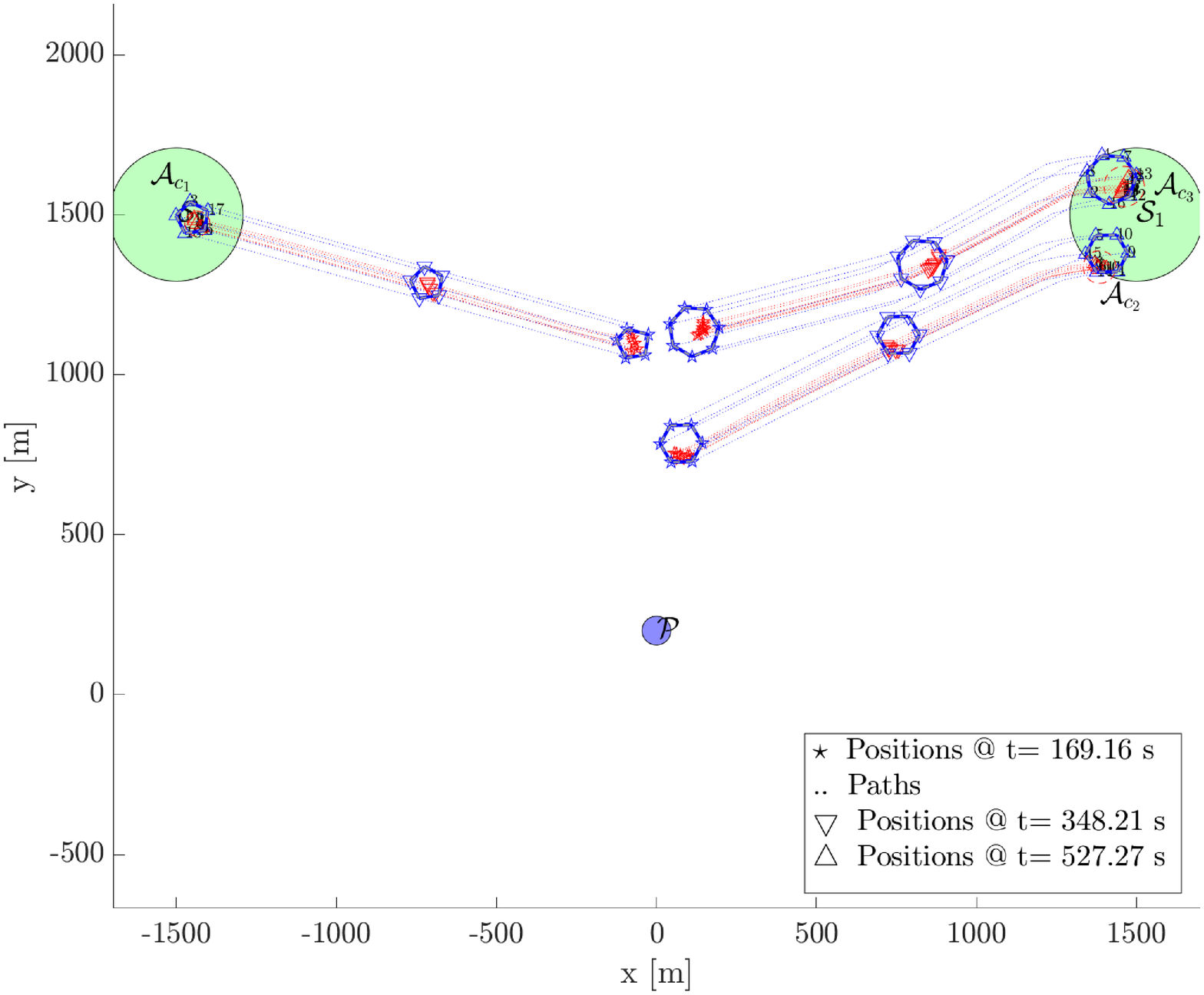}
		\label{fig:multiSwarmHerd4}
		\end{subfigure}
		\caption{Snapshots of the paths of the agents during Multi-Swarm StringNet Herding}
		\label{fig:multiSwarmHerd}
	\end{figure*}
	
	\vspace{-1mm}
	\section{Conclusions} \label{sec:conclusions}
	We proposed a clustering-based, connectivity-constrained assignment algorithm that distributes and assigns groups of defenders against swarms of the attackers, to herd them to the closest safe area using `StringNet Herding' approach. We also provide a heuristic for the defender-swarm assignment based on the optimal MIQCP that finds the assignment quickly. Simulations show how this proposed method improves the original 'StringNet Herding' method and enables the defenders herd all the attackers to safe areas even though the attackers start splitting into smaller swarms in reaction to the defenders.

	\vspace{-2mm}
	\bibliographystyle{IEEEtran}
	\bibliography{CDC2020_Refs}
\end{document}